\theoremstyle{definition}
\newtheorem{theorem}{Theorem}
\newtheorem{proposition}[theorem]{Proposition}
\newtheorem{remark}[theorem]{Remark}
\newcommand{\E}{\mathbb E}
\newcommand{\e}{\mathrm e}
\newcommand{\D}{\mathrm{d}}
\newcommand{\F}{\mathcal F}
\begin{document}
\title{Probabilistic and statistical properties of moment variations and their use in inference and estimation based on high frequency return data}
\author{Kyungsub Lee\footnote{klee@unist.ac.kr, School of Business Administration, UNIST, UNIST-gil 50, Ulsan, Republic of Korea}}
\maketitle

\begin{abstract}
We discuss the probabilistic properties of the variation based third and fourth moments of financial returns as estimators of the actual moments of the return distributions.
The moment variations are defined under non-parametric assumptions with quadratic variation method but for the computational tractability, we use a square root stochastic volatility model for the derivations of moment conditions for estimations.
Using the S\&P 500 index high frequency data, the realized versions of the moment variations is used for the estimation of a stochastic volatility model.
We propose a simple estimation method of a stochastic volatility model using the sample averages of the variations and ARMA estimation.
In addition, we compare the results with a generalized method of moments estimation based on the successive relation between realized moments and their lagged values.
\end{abstract}
\smallskip
\noindent \textbf{Keywords.} high order moment, quadratic variation, stochastic voaltility, generalized method of moments\\

\section{Introduction}

In this paper, we discuss the properties of third and fourth moment variations and their realized versions of financial asset returns.
The realized third and fourth moments are defined based on quadratic variation methods as extensions of the definition of the realized variance which is a high-frequency data based estimator for the variance of asset returns.
The realized third and fourth moments are unbiased estimators of the third and fourth moments of return, respectively, under the martingale assumption of the return process.

The third and fourth moments of asset returns are hard to measure precisely in a conventional measure based on sample average.
In spite of the difficulty, the high moments played important roles in asset pricing, portfolio and risk managements.
The literature includes \cite{KrausLitzenberger}, \cite{HarveySiddique1999}, \cite{HarveySiddique}, \cite{BakshiKapadiaMadan}, \cite{KimWhite}, \cite{Brooks}, \cite{Christoffersen}, \cite{LeonRubioSerna}, \cite{Neuberger2012}, \cite{Romo} and \cite{ChoeLee}.

The variation based definitions of the realized third and fourth moments of a return process used in this paper are introduced by \cite{ChoeLee}.
The third moment variation is defined as the quadratic covariation between the return and its squared processes over a fixed time period and
the fourth moment variation is defined as the quadratic variation of the squared return process.
The variation based realized moments are finite sum approximations of these (co)variations multiplied by 1.5.
Similar quadratic variation based methods are well known for the estimation of the variance of returns and there is a growing literature during the last two decades, only a few of them are quoted here, see \cite{ABDL}, \cite{Barndorff2002a},\cite{Barndorff-Nielsen2004}, \cite{Hansen} and \cite{MyklandZhang}.

\cite{ChoeLee} focused on the risk-neutral properties of the high moments by deriving the option implied expectation of the moments.
In contrast, in this paper, we focus on the properties of the realized high moments of the return process.
Empirical and simulation suggests that the variation based realized moments are relatively efficient measurements for the actual third and fourth moments of the return distribution under a martingale condition for the return process.
The traditional measures of the high moments based on the sample averages of the cube and fourth power of the return are heavily influenced by outliers \citep{KimWhite, Neuberger2012},
and the relative efficient measurements based on quadratic variation methods will have better performances in estimations and statistical inferences.
Indeed, we show a strong evidence of negative skew in daily return by the realized third moment, which can be rejected under a test based on the sample average of the cubed daily return.

In the next, we explain the estimation methods for a stochastic volatility model using the realized second and third moments.
\cite{Bollerslev2002} proposed an estimation method of a stochastic volatility diffusion model based on integrated volatility which is equivalent to the quadratic variation of the return under the model.
They showed that even though the volatility process in latent, by the observability of the integrated volatility,
we are able to estimate the mean-reversion, long-run variance and volatility of volatility parameters based on the generalized method of moments (GMM).

Recently, similar approaches for the estimations are studied in \cite{Bollerslev2011} and \cite{Garcia2011} for stochastic volatility models and \cite{Da2014} for the tick-structure of price movements.
We have an additional high-frequency quantity compared with the existing estimations of stochastic volatility models, the realized third moment, which is especially useful to estimate the leverage parameter in stochastic volatility models, which determines the skew, thanks to the closed form formula for the expectation of the third moment variation.
Combining these results, we employ the GMM estimation for a stochastic volatility model.
Additionally, we proposed a simple method of the stochastic volatility model estimation using sample average and autoregressive moving average model estimation method.
Despite of its simplicity, simulation studies show that the result values of the simple estimation is close to the original parameter settings compared with the complicated GMM.

One of the interesting properties about the realized moments is about moment variation swaps.
For example, the third moment variation swap, introduced by \cite{ChoeLee}, is an instrument
that allows investors to take bets on or hedge the skew risk of an underlying asset.
For the previous similar work on trading the skew risk by swap contracts, see \cite{Schoutens}, \cite{Neuberger2012}, \cite{Kozhan2013} and \cite{Zhao}.
\cite{ChoeLee} demonstrated that the return distribution of the portfolio hedged by the third moment swap has thinner tail that the return distribution of the underlying asset and hence one can reduce the fat tail and skew risk.

The remainder of the paper is organized as follows:
In Section~\ref{Sect:moment}, we discuss the basic properties of the moment variations under a martingale condition and compute the biases when the return is not martingale.
Section~\ref{Sect:GMM} presents the estimation methods and results of a stochastic volatility model with the realized second and third moment.
Section~\ref{Sect:conclusion} concludes the paper. 

\section{Moment variation}\label{Sect:moment}

We recall the definition of the third and fourth moment variations of stochastic processes, introduced by \cite{ChoeLee}, and discuss their basic properties.
Consider a semimartingale log-return process $R_t = \log S_t - \log S_0$ with the corresponding asset price process $S$.
The time horizon $[0,t]$ can be a day, a month or any time interval to be observed.
The third and fourth moment variation processes are defined by $[R,R^2]$ and $[R^2]$, respectively,
where the brackets are used to denote the quadratic (co)variation processes.
From now on, we discuss their realized values of moment variation from high-frequency data to measure the high order moments of the return distribution.

First, we show that the unbiasedness of the high moment variations under a martingale assumption as a measure of the actual moments of the return distribution.
Note that, 
\begin{align}
R^3_t &= R_t R^2_t = \int_0^t R^2_{u} \D R_u + \int_0^t R_{u} \D R^2_u + [R,R^2]_t \nonumber\\
&= \int_0^t R^2_{u} \D R_u + \int_0^t R_{u} (2R_{u} \D R_u + \D [R]_u) + [R,R^2]_t \nonumber\\
&= 3\int_0^t R^2_{u} \D R_u + \int_0^t R_{u} \D [R]_u + [R,R^2]_t. \label{Eq:1}
\end{align} 
In addition,
$$R^2_t = 2\int_0^t R_{u} \D R_{u} + [R]_t$$
and since the covariation between $R$ and $[R]$ is zero, the covariation process between $R$ and $R^2$ is represented by
$$ [R,R^2]_t = 2\int_0^t R_{u} \D [R]_{u}.$$
The above equation also can be derived by applying It\={o}'s formula to $R^3$,
$$ R^3_t = 3\int_0^t R^2_{u} \D R_u + 3\int_0^t R_{u} \D [R]_{u} $$
and compare it with Eq.~\eqref{Eq:1}.
Finally, substituting $\int_0^t R_{u} \D [R]_{u}$ with $[R,R^2]_t/2$ in Eq.~\eqref{Eq:1}, we have
\begin{equation}
R^3_t = 3\int_0^t R^2_{u} \D R_u + \frac{3}{2}[R,R^2]_t. \label{Eq:3}
\end{equation}

Similarly,
\begin{align*}
R^4_t &= R^2_t\cdot R^2_t = 2 \int_0^t R^2_{u} \D R^2_u + [R^2]_t \\
&= 2 \int_0^t R^2_{u} (2R_{u} \D R_u + \D [R]_u) + [R^2]_t \\
&= 4 \int_0^t R^3_{u} \D R_u + 2 \int_0^t R^2_{u} \D [R]_u + [R^2]_t.
\end{align*}
Since
$$ [R^2]_t = 4\int_0^t R^2_{u} \D [R]_{u},$$
we have
\begin{equation}
R^4_t = 4 \int_0^t R^3_{u} \D R_u + \frac{3}{2} [R^2]_t.  \label{Eq:4}
\end{equation}
By assuming further that $R$ is a martingale and the stochastic integrations
$$  \int_0^t R^2_{u} \D R_u, \quad  \int_0^t R^3_{u} \D R_u$$
are martingales, by Eqs.~\eqref{Eq:3}~and~\eqref{Eq:4}, we have
\begin{align*}
\E[R^3_t] &= \frac{3}{2}\E \left[ [R,R^2]_t\right],\\
\E[R^4_t] &= \frac{3}{2}\E \left[ [R^2]_t\right].
\end{align*}
Thus, the third and fourth moment variations multiplied by 1.5 can be used as unbiased estimators of the third and fourth moments of return $R_t$ under the martingale conditions.
Note that in \cite{ChoeLee}, the unbiasedness of the third and fourth moment variations in the stochastic volatility model of \cite{Heston1993} was shown but in the above, the unbiasedness property is extended to any martingale model.

We start with general martingale models, but for computationality, 
consider the Heston-type square root stochastic volatility model:
\begin{align}
\D R_t &= \sqrt{V_t} \D W^s_t \label{Eq:Heston1} \\
\D V_t &= \kappa(\theta-V_t)\D t + \gamma \sqrt{V_t} \D W^v_t, \quad \D[W^s, W^v]_t = \rho \D t.\label{Eq:Heston2}
\end{align}
with the Feller condition $2\kappa \theta > \gamma^2$.
It is well known that
\begin{align}
\E[V_t] &= V_0 \e^{-\kappa t} + \theta (1-\e^{-\kappa t})  \label{Eq:mean_V}
\end{align}
and with some calculations, we have
\begin{equation}
\E[ [R]_{t} ] = \E\left[ \int_0^t V_s \D s \right] = \frac{1-\e^{-\kappa t}}{\kappa} (V_0 -\theta) + \theta t \label{Eq:mean_rsm}
\end{equation}
and
\begin{align}
\E[V_t^2] &= \e^{-2\kappa t} V_0^2 + \frac{2\kappa\theta + \gamma^2}{\kappa}(\e^{-\kappa t} - \e^{-2\kappa t}) V_0
+ \frac{\theta(2\kappa\theta + \gamma^2)}{\kappa}\left( \frac{1}{2} - \e^{-\kappa t} + \frac{1}{2}\e^{-2\kappa t}\right). \label{Eq:mean_squaredV}
\end{align}
In addition, throughout this paper, the following computation results are used.

\begin{proposition}\label{Prop:variance}
Under the assumption of Eqs.~\eqref{Eq:Heston1}~and~\eqref{Eq:Heston2}, we have\\
(i)
\begin{align*}
\E\left[V_t\int_0^t V_u \D u\right] 
={}& (V_0-\theta)\left(\theta + \frac{\gamma^2}{\kappa}\right)t\e^{-\kappa t} + \left(\frac{(V_0-\theta)(V_0-2\theta)}{\kappa} - \frac{\gamma^2 V_0}{\kappa^2}\right)\e^{-\kappa t} \\
&+ \left\{ -\frac{(V_0-\theta)^2}{\kappa} + \frac{\gamma^2}{\kappa^2}\left(V_0 -\frac{1}{2}\theta \right) \right\}\e^{-2\kappa t} + \theta^2 t + \frac{(V_0-\theta)\theta}{\kappa} + \frac{\gamma^2\theta}{2\kappa^2}.
\end{align*}
(ii)
\begin{align}
\E\left[ [R]_t^2 \right] ={}& \frac{2(\theta-V_0)}{\kappa}\left(\theta + \frac{\gamma^2}{\kappa}\right)t\e^{-\kappa t} + 2\left\{ \frac{\gamma^2 \theta}{\kappa^3} - \frac{(V_0-\theta)^2}{\kappa^2} \right\}\e^{-\kappa t} \nonumber\\
&+ \left\{\frac{1}{\kappa^2}(V_0-\theta)^2 - \frac{\gamma^2}{\kappa^3}\left(V_0 - \frac{1}{2}\theta \right) \right\}\e^{-2\kappa t} + \theta^2 t^2 + \left\{ \frac{2(V_0-\theta)\theta}{\kappa} + \frac{\gamma^2\theta}{\kappa^2} \right\} t \nonumber\\
&+ \frac{1}{\kappa^2}(V_0 - \theta)^2 + \frac{\gamma^2}{\kappa^3}\left(V_0 - \frac{5}{2}\theta \right),\label{Eq:mean_sqauredrsm}\\
={}& \frac{1}{\kappa^2}(\e^{-\kappa t} - 1)^2 V_0^2 \nonumber\\
&+ \left\{ -\frac{2}{\kappa}\left(\theta + \frac{\gamma^2}{\kappa} \right)t\e^{-\kappa t} + \frac{4\theta}{\kappa^2}\e^{-\kappa t} - \left(\frac{2\theta}{\kappa^2} + \frac{\gamma^2}{\kappa^3} \right)\e^{-2\kappa t}+ \frac{2\theta}{\kappa}t - \frac{2\theta}{\kappa^2} + \frac{\gamma^2}{\kappa^3}\right\}V_0 \nonumber\\
&+ \frac{2\theta}{\kappa}\left(\theta + \frac{\gamma^2}{\kappa} \right)t\e^{-\kappa t} + 2\left(\frac{\gamma^2\theta}{\kappa^3} - \frac{\theta^2}{\kappa^2} \right)\e^{-\kappa t} + \left(\frac{\theta^2}{\kappa^2} + \frac{\gamma^2\theta}{2\kappa^3} \right)\e^{-2\kappa t}\nonumber\\
&+ \theta^2 t^2 + \left(\frac{\gamma^2 \theta}{\kappa^2} - \frac{2\theta^2}{\kappa} \right)t + \frac{\theta^2}{\kappa^2} - \frac{5\gamma^2\theta}{2\kappa^3}.\label{Eq:mean_sqauredrsm2}
\end{align}
\end{proposition}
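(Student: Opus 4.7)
The plan is to reduce both identities to the already-known quantities $\E[V_u]$ and $\E[V_u^2]$ from \eqref{Eq:mean_V} and \eqref{Eq:mean_squaredV} by exploiting the Markov property and the affine drift of $V$. For $0\le u\le t$, applying \eqref{Eq:mean_V} to the variance process restarted at time $u$ gives
\[
\E[V_t\mid \F_u] = V_u\,\e^{-\kappa(t-u)} + \theta\bigl(1-\e^{-\kappa(t-u)}\bigr).
\]
Multiplying by $V_u$ and taking expectations produces the closed form
\[
\E[V_u V_t] = \e^{-\kappa(t-u)}\,\E[V_u^2] + \theta\bigl(1-\e^{-\kappa(t-u)}\bigr)\E[V_u],
\]
into which the right-hand sides of \eqref{Eq:mean_V} and \eqref{Eq:mean_squaredV} can be substituted to get a finite sum of terms of the form $c\, \e^{-a\kappa u}\e^{-\kappa(t-u)}$ with $a\in\{0,1,2\}$.

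For part (i), I would apply Fubini to write $\E\bigl[V_t\int_0^t V_u\D u\bigr] = \int_0^t \E[V_u V_t]\D u$ and integrate termwise, using only elementary integrals of the form $\int_0^t \e^{-a\kappa u}\e^{-\kappa(t-u)}\D u$. Grouping the output by the factors $\e^{-\kappa t}$, $t\,\e^{-\kappa t}$, $\e^{-2\kappa t}$, $t$, and constants recovers the stated expression. For part (ii), since $[R]_t=\int_0^t V_u\D u$ is absolutely continuous in $t$, I have $\D[R]_t^2/\D t = 2V_t[R]_t$, so
\[
\E\bigl[[R]_t^2\bigr] = 2\int_0^t \E\!\left[V_s\int_0^s V_u\D u\right]\D s,
\]
and inserting the formula from (i) and integrating once more in $s$—where the integrals again involve only polynomials of degree at most one in $s$ times exponentials in $s$—yields \eqref{Eq:mean_sqauredrsm}. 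The alternative form \eqref{Eq:mean_sqauredrsm2} is obtained afterwards by simply regrouping the terms as a polynomial of degree two in $V_0$.

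The main obstacle is purely bookkeeping: every individual step is a routine integration of exponentials or a polynomial manipulation, but the coefficients proliferate with denominators $\kappa$, $\kappa^2$, $\kappa^3$ and several competing exponential factors, so collecting everything into the compact form stated in the proposition—and verifying that \eqref{Eq:mean_sqauredrsm} and \eqref{Eq:mean_sqauredrsm2} agree as functions of $V_0$—demands meticulous algebra rather than any new conceptual idea.
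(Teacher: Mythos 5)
Your proposal is correct, and it reaches the result by a genuinely different route from the paper. The paper works directly at the SDE level: it writes $z(t)=\E[(\int_0^t V_s\D s)^2]=2\int_0^t p(s)\D s$ with $p(s)=\E[V_s\int_0^s V_u\D u]$, expands $V_s$ via its integral form, and is led to a pair of nested integral equations — one for the auxiliary quantity $w(u)=\E[V_u\int_0^u\sqrt{V_k}\,\D W^v_k]$ (solved via It\={o} isometry) and one for $z$ itself — and finally recovers part (i) by differentiating $z$. You instead exploit the Markov property to get the two-point function $\E[V_uV_t]=\e^{-\kappa(t-u)}\E[V_u^2]+\theta(1-\e^{-\kappa(t-u)})\E[V_u]$ from the already-stated formulas \eqref{Eq:mean_V} and \eqref{Eq:mean_squaredV}, obtain (i) by a single Fubini step and termwise integration, and then integrate once more (using the same identity $[R]_t^2=2\int_0^t V_s[R]_s\D s$ that underlies the paper's $z(t)=2\int_0^t p(s)\D s$) to get (ii); so you build (i) first and integrate up to (ii), while the paper builds (ii) first and differentiates down to (i). Your route avoids all manipulation of stochastic integrals and integral equations, at the price of taking \eqref{Eq:mean_squaredV} as an external input (which the paper states without proof anyway, so this is legitimate); the paper's route is more self-contained at the level of the dynamics and produces $w(u)$, which is of independent use. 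Both are sound, and your reduction to elementary integrals of the form $\int_0^t\e^{-a\kappa u}\e^{-\kappa(t-u)}\D u$, $a\in\{0,1,2\}$, together with the final regrouping in powers of $V_0$ for \eqref{Eq:mean_sqauredrsm2}, is exactly the bookkeeping required.
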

\begin{proof}
Let
\begin{align}\label{Eq:ODE1}
z(t) = \E\left[ \left( \int_0^t V_s \D s \right)^2 \right] = 2\E \left[ \int_0^t V_s \left( \int_0^s V_u \D u  \right) \D s \right] = 2 \int_0^t p(s) \D s
\end{align}
where
\begin{align*}
p(s) &= \E\left[V_s\int_0^s V_u \D u\right] = \E\left[\left(V_0 + \kappa\int_0^s(\theta-V_u)\D u + \gamma\int_0^s\sqrt{V_u}\D W^v_u \right)\left(\int_0^s V_u \D u \right) \right]\\
&= (V_0 + \kappa\theta s) \E \left[ \int_0^s V_u \D u \right] - \kappa \E\left[ \left(\int_0^s V_u \D u \right)^2\right] + \gamma \E \left[ \int_0^s V_u \D u \int_0^s \sqrt{V_u} \D W^v_u \right]\\
&= (V_0 + \kappa\theta s) \int_0^s \E[ V_u ] \D u - \kappa z(s) + \gamma \int_0^s w(u) \D u
\end{align*}
where
\begin{align*}
w(u) &= \E\left[ V_u \int_0^u \sqrt{V_k} \D W^v_k \right] \\
&= -\kappa \E \left[ \int_0^u V_k \D k \int_0^u \sqrt{V_k} \D W_k^v \right]
+ \gamma \E \left[\left( \int_0^u \sqrt{V_k} \D W_k^v \right)^2 \right]\\
&= -\kappa \E \left[ \int_0^u V_k \left( \int_0^u \sqrt{V_k} \D W_k^v \right) \D k \right] + \gamma \int_0^u ((V_0 - \theta)\e^{-\kappa k} + \theta )\D k \\
&= -\kappa \int_0^u w(k) \D k + \gamma \int_0^u ((V_0 - \theta)\e^{-\kappa k} + \theta )\D k.
\end{align*}
By solving the above integration equation, we have
$$ w(u) = \gamma(V_0 - \theta)u\e^{-\kappa u} + \frac{\gamma \theta}{\kappa}(1-\e^{-\kappa u})$$
and by solving 
$$
z(t) = 2 \int_0^t \left\{ (V_0 + \kappa\theta s) \int_0^s \E[ V_u ] \D u - \kappa z(s) + \gamma \int_0^s w(u) \D u \right\}\D s,
$$
we have the formula for the expectation of the square of the integrated variance.
In addition, using
\begin{align*}
\E\left[V_t\int_0^t V_u \D u\right] ={}& \frac{1}{2}\frac{\D z(t)}{\D t},
\end{align*}
we have the desired result.
\end{proof}

\begin{proposition}\label{Prop:tmv}
Under the assumption of Eqs.~\eqref{Eq:Heston1}~and~\eqref{Eq:Heston2}, we have the following moment condition for the third moment variation.
\begin{align*}
\textrm{(i) }\E[R_u V_u] &= \frac{\gamma\rho}{\kappa} \{ (V_0 -\theta)\kappa u \e^{-\kappa u} + \theta( 1-\e^{-\kappa u} )\}\\
\textrm{(ii) }\E[[R^2,R]_t] &= 2\int_0^t \E[R_u V_u] \D u \\
&= \frac{2\gamma \rho}{\kappa} \left[(V_0-\theta)\left( \frac{1-\e^{-\kappa t}(\kappa t +1)}{\kappa} \right) + \theta \left(\frac{\e^{-\kappa t}-1}{\kappa} + t \right) \right].
\end{align*}
\end{proposition}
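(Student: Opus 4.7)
The plan is to prove (i) by applying It\^o's product rule to $R_tV_t$, reducing the expectation to a linear first-order ODE whose inhomogeneous term is the known mean $\E[V_t]$ from \eqref{Eq:mean_V}; then (ii) will follow from the identity $[R,R^2]_t = 2\int_0^t R_u\D[R]_u$ already established in the text preceding Eq.~\eqref{Eq:3}, together with $\D[R]_u = V_u\D u$ and Fubini's theorem.

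For part (i), I would first observe that the cross variation of the Brownian drivers yields $\D[R,V]_t = \gamma\rho V_t\D t$, since $\D R_t = \sqrt{V_t}\D W^s_t$, $\D V_t = \kappa(\theta-V_t)\D t + \gamma\sqrt{V_t}\D W^v_t$, and $\D[W^s,W^v]_t=\rho\D t$. It\^o's product rule, applied with $R_0 = 0$, then yields
$$
R_tV_t = \int_0^t \kappa R_s(\theta-V_s)\D s + \int_0^t \gamma R_s\sqrt{V_s}\D W^v_s + \int_0^t V_s^{3/2}\D W^s_s + \gamma\rho\int_0^t V_s\D s.
$$
Taking expectations kills the two stochastic integrals, and because $R$ is a zero-mean martingale the $\kappa\theta\int_0^t\E[R_s]\D s$ contribution also vanishes. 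Writing $f(t) := \E[R_tV_t]$, this reduces to $f(t) = -\kappa\int_0^t f(s)\D s + \gamma\rho\int_0^t\E[V_s]\D s$, equivalent to the linear ODE $f'(t)+\kappa f(t) = \gamma\rho\,\E[V_t]$ with $f(0)=0$. Substituting \eqref{Eq:mean_V} and integrating with the factor $\e^{\kappa t}$ produces precisely the stated expression.

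For part (ii), since $\D[R]_u = V_u\D u$ in the model \eqref{Eq:Heston1}, the identity from the text preceding \eqref{Eq:3} becomes $[R,R^2]_t = 2\int_0^t R_uV_u\D u$. Fubini's theorem then gives $\E[[R^2,R]_t] = 2\int_0^t\E[R_uV_u]\D u$, and the closed form drops out upon substituting (i) and computing the elementary integrals $\int_0^t \kappa u\e^{-\kappa u}\D u = \frac{1-\e^{-\kappa t}(1+\kappa t)}{\kappa}$ and $\int_0^t(1-\e^{-\kappa u})\D u = t + \frac{\e^{-\kappa t}-1}{\kappa}$.

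The only genuinely delicate step will be justifying that $\int_0^\cdot \gamma R_s\sqrt{V_s}\D W^v_s$ and $\int_0^\cdot V_s^{3/2}\D W^s_s$ are true martingales, not merely local martingales, so that their expectations vanish. This requires finiteness of $\E\!\int_0^t R_s^2V_s\D s$ and $\E\!\int_0^t V_s^3\D s$, both of which follow from standard CIR moment estimates under the Feller condition $2\kappa\theta > \gamma^2$, combined with $\E[R_s^2] = \E[[R]_s]$ given by \eqref{Eq:mean_rsm}. Once the integrability is in place, the rest is routine calculus.
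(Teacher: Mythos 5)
Your argument is correct and reaches both stated formulas, but it organizes the computation differently from the paper. You apply It\^o's product rule directly to $R_tV_t$, so that the covariation term $\D[R,V]_t=\gamma\rho V_t\,\D t$ appears immediately and the whole problem collapses to a single linear ODE, $f'(t)+\kappa f(t)=\gamma\rho\,\E[V_t]$ with $f(0)=0$, for $f(t)=\E[R_tV_t]$. The paper instead expands $V_u=V_0+\kappa\int_0^u(\theta-V_s)\,\D s+\gamma\int_0^u\sqrt{V_s}\,\D W^v_s$, multiplies by $R_u$, and treats the two resulting cross-terms separately: the product of the two stochastic integrals is evaluated by It\^o's isometry (the function $x$), while the term $y(t)=\E[\int_0^t V_s\,\D s\int_0^t\sqrt{V_s}\,\D W^s_s]$ requires an integration by parts and its own integral equation before the pieces are reassembled as $\E[R_uV_u]=-\kappa y(u)+x(u)$. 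Both routes are at bottom ``derive an ODE from the dynamics,'' but yours avoids the auxiliary quantities $x$ and $y$ and the nested integral equation, which makes it shorter and easier to check; the paper's decomposition has the mild advantage that $x$ and $y$ are reused elsewhere in its computations. Part (ii) is handled identically in both ($[R,R^2]_t=2\int_0^t R_uV_u\,\D u$ plus Fubini). You also explicitly flag the local-martingale versus true-martingale issue for $\int_0^\cdot\gamma R_s\sqrt{V_s}\,\D W^v_s$ and $\int_0^\cdot V_s^{3/2}\,\D W^s_s$, which the paper passes over in silence; one small quibble is that finiteness of the positive moments $\E[V_s^3]$ of a CIR process holds irrespective of the Feller condition, so invoking $2\kappa\theta>\gamma^2$ there is unnecessary (though harmless).
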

\begin{proof}
Note that
\begin{align*}
\E[R_u V_u] &= \E \left[ \int_0^u \sqrt{V_s}\D W^s_s \left(V_0 + \int_0^u \kappa(\theta-V_s)\D s + \int_0^u \gamma \sqrt{V_s} \D W^v_s  \right)\right]\\
&= \E \left[ \kappa\theta u \int_0^u \sqrt{V_s}\D W^s_s - \kappa\int_0^u V_s\D s\int_0^u \sqrt{V_s}\D W^s_s + \int_0^u  \sqrt{V_s}\D W^s_s \int_0^u \gamma \sqrt{V_s} \D W^v_s \right]\\
&= \E \left[- \kappa\int_0^u V_s\D s\int_0^u  \sqrt{V_s}\D W^s_s + \int_0^u \sqrt{V_s}\D W^s_s \int_0^u \gamma \sqrt{V_s} \D W^v_s \right].
\end{align*}
For the second term inside the expectation, by It\={o}'s isometry, we have
\begin{align}
x(t) :={}& \E\left[\int_0^t \sqrt{V_s}\D W^s_s \int_0^t \gamma \sqrt{V_s} \D W^v_s \right] = \gamma\rho \int_0^t \E[V_s] \D s \nonumber\\
={}& \gamma\rho \left\{ \frac{(V_0-\theta)}{\kappa} \left( 1-\e^{-\kappa t} \right) + \theta t \right\} \label{Eq:x}
\end{align}
and for the first term, using the integration by part,
\begin{align*}
y(t) :={}& \E \left[\int_0^t V_s\D s\int_0^t \sqrt{V_s}\D W^s_s \right] \\
={}& \E \left[ \int_0^t \left( \int_0^s V_u \D u \right) \sqrt{V_s} \D W^s_s \right] + \E \left[ \int_0^t \left( \int_0^s  \sqrt{V_u} \D W^s_u\right) V_s \D s \right]\\
={}& \E \left[ \int_0^t V_s \left( \int_0^s  \sqrt{V_u} \D W^s_u \right) \D s \right].
\end{align*}
Furthermore,
\begin{align*}
\E \left[ V_s \int_0^s  \sqrt{V_u} \D W^s_u \right] &= \E \left[ \left(V_0 + \int_0^s \D V_u \right) \int_0^s  \sqrt{V_u} \D W^s_u \right]\\
&= \E\left[\int_0^s \kappa (\theta - V_u) \D u \int_0^s \sqrt{V_u} \D W^s_u\right] + \E
\left[\int_0^s \gamma \sqrt{V_u} \D W^v_u \int_0^s \sqrt{V_u} \D W^s_u \right]\\
&= -\kappa \E\left[\int_0^s V_u \D u \int_0^s \sqrt{V_u}\D W^s_u \right] + \E
\left[\int_0^s \gamma \sqrt{V_u} \D W^v_u \int_0^s \sqrt{V_u} \D W^s_u \right]\\
&= -\kappa y(s) + x(s).
\end{align*}
Thus,
$$ y(t) =  \int_0^t \left\{ -\kappa y(s) + x(s) \right\}\D s$$
and solving the integration equation, we have
\begin{equation}
y(t) = \frac{\gamma\rho}{\kappa^2} \left\{(V_0-\theta)(1-\e^{-\kappa t}-\kappa t \e^{-\kappa t}) + \theta(-1 + \e^{-\kappa t} + \kappa t)\right\}.
\end{equation}
Applying
\begin{align}
\E[R_u V_u] &= -\kappa y(u) + x(u) \label{Eq:RV},
\end{align}
we complete the proof.
\end{proof}

The formulas for $\E[[R]_t^2]$ and $\E[[R, R^2]_t]$ seems complicated but if we assume that the variance process is in the stationarity state at time 0, i.e., $V_0 = \theta$, then the formulas are simpler.
We discuss their uses in the next section.
The moment condition for the fourth moment variation, $\E[[R^2]_t]$ is rather complicated and not used elsewhere in this paper, so we discuss the derivation in Appendix~\ref{Sect:fmv}.

\begin{remark}
Consider the biases of the estimators when the drift in the return process is positive constant under the Heston-type square root stochastic volatility model:
\begin{align*}
\D R_t &= \mu \D t + \sqrt{V_t}\D W_t^s,\\
\D V_t &= \kappa(\theta - V_t) \D t + \gamma \sqrt{V_t} \D W_t^v, \quad \D [W^s, W^v]_t = \rho \D t.
\end{align*}
Note that the bias between the third moment and the expectation of $1.5[R,R^2]$ is
$$\E [R_t^3] - \frac{3}{2}\E[[R,R^2]_t] = 3 \E\left[\int_0^t R_u^2 \D R_u \right].$$
In addition,
\begin{align}
\E[R_u^2] &= \E \left[2\int_0^u R_s \D R_s + [R]_u \right]
= \E \left [2\int_0^u \mu R_s \D s + \int_0^u V_s \D s \right]
= \mu^2 u^2 + \theta u \label{Eq:R_square}
\end{align}
where we use the fact that the long-run expectation of $V_s$ is $\theta$ and assume that the variance process starts from the negative infinity and is in its stationary regime.
Thus, we have
\begin{align*}
3 \E \left[ \int_0^t R^2_u \D R_u \right] = 3 \E \left[ \int_0^t \mu R^2_u  \D u \right] = \mu^3 t^3 + \frac{3}{2}\mu\theta t^2
\end{align*}
which is the bias between the third moment and the variation based third moment.

Similarly, for the bias of the fourth moment, we have
\begin{align*}
\E [R^4_t] - \frac{3}{2}\E [[R^2_t]] = \E \left[ 4\int_0^t R^3_u \D R_u\right]
\end{align*}
and
\begin{align*}
\E[R_u^3] &= 3\E\left[\int_0^u R_s^2 \D R_s \right] + 3\E\left[\int_0^u R_s \D [R]_s \right]
= 3 \E \left[ \int_0^u \mu R_s^2 \D s \right] + 3 \E \left[\int_0^u R_s V_s \D s \right]\\
&= \mu^3 u^3 + \frac{3}{2}\mu\theta u^2 + \frac{3\gamma\rho\theta}{\kappa}\left(\frac{\e^{-\kappa u}-1}{\kappa} + u \right)
\end{align*}
where we use the formula for the long run expectation of $R_s V_s$ derived in Proposition~\ref{Prop:tmv}.
Thus, the bias with a constant drift in the return process is
$$ \E [R^4_t] - \frac{3}{2}\E [[R^2_t]] = \mu^4 t^4 + 2 \mu^2 \theta t^3 + \frac{12\gamma\rho\theta\mu}{\kappa}\left(\frac{1-\e^{-\kappa t}}{\kappa^2} + \frac{t^2}{2} - \frac{t}{\kappa} \right).$$

In Figure~\ref{Fig:bias}, we plot the term structures of the expected third moments with zero drift (solid) and constant drift (dashed) in the return process.
The parameter settings are $\theta = 0.0233, \kappa = 18.05, \gamma = 1.2305, \rho = -0.6191$ and $\mu = 0.05$.
These settings come from Section~\ref{Sect:estimation} where we perform an estimation for the S\&P 500 index series ranged from 2001 to 2007 (except for $\mu$, which is just presumed).
In addition, the theoretical formula for the long run expectation of the third moment will be derived in Eq.~\eqref{Eq:simple_rtm}.
The figure shows that the bias increases as the time horizon increases but it is rather insignificant especially for short term horizons.
\end{remark}

Second, the sample means of the moment variations multiplied by 1.5 are consistent estimators of the actual moments when the return process is a time homogeneous martingale with independent increment.
In this context, the sample mean implies the average of $M$ numbers of the moment variations with subintervals $[t_j, t_{j+1}]$ where $\Delta t =t_{j+1} - t_j$ is fixed for $j=0,\cdots M-1$.
For the third moment variation, we write
\begin{align*}
\frac{3}{2}[R,R^2]_{t_j, t_{j+1}} = R^3_{t_j, t_{j+1}} - 3\int_{t_j}^{t_{j+1}} R^2_{t_j, u} \D R_{t_j,u}
\end{align*}
where the time subscripts imply that the starting point of each process is $t_j$ and the end point is $t_{j+1}$.
For example, $R^2_{t_j, u} = \log^2 (R_{u} /R_{t_j}) $.

By taking the sample mean to the above equation, the first term of the r.h.s. is the sample mean of the third moment and hence converges to the actual third moment of the return distribution.
The sample mean of the second term of the r.h.s. represented by a stochastic integration converges to zero since the return process, the integrator, is a time homogeneous martingale with independent increments.
(We also use the additional assumption that the stochastic integration is also martingale.)
Thus, the sample means of the moment variations multiplied by 1.5 converges to the actual third moment as the sample size goes to infinity.
For the fourth moment variation, the same argument is applied to show the consistency.

Third, with the same assumptions in the above paragraph and by the central limit theorem, the sample means of the moment variations have asymptotically normal distributions as $[R,R^2]_{t_j, t_{j+1}}$ are i.i.d. random variable.
The derivation of variance of the asymptotic distribution is complicated and model dependent and 
we discuss the derivation of the variance of the third moment variation in Appendix~\ref{Sect:vtmv}.

Fourth, empirical studies in the next section and demonstrate that the moment variations are relatively efficient estimators compared with the conventional measures of the moments,
i.e., the variances of the moment variations are less than the variances of $R^3_t$ and $R^4_t$.
We also perform a simulation study to demonstrate the convergence of the third moment variation (multiplied by 1.5) to the theoretic value of the actual third moment.
Figure~\ref{Fig:Heston_rtm} shows that the sample mean of $\frac{3}{2}[R^2,R]_t$ converges to the theoretical expectation of the third moment faster than the sample mean of $R^3_t$ under the stochastic volatility model.
In the simulation, the parameter and initial value settings are $\kappa = 3, \theta = 0.04, \gamma = 2, \rho = -0.5$ and $V_0 = 0.05$.

\begin{figure}
\centering
  \includegraphics[width=10cm]{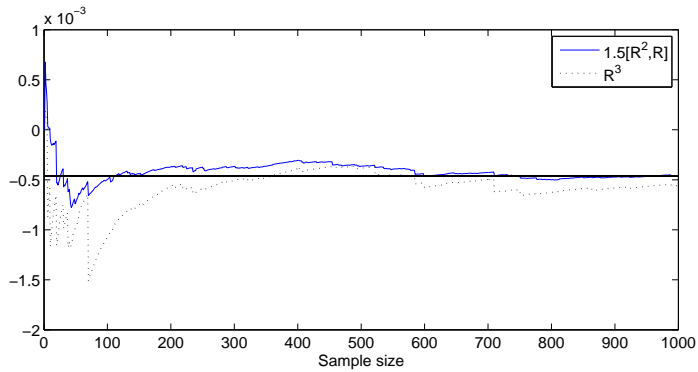}\\
  \caption{Moment estimator based on $[R^2,R]$ (dashed curve) converges to theoretical third moment (dashed straigth line) faster than the sample third moment (dotted) as sample size increase in the Heston-type stochastic volatility model}\label{Fig:Heston_rtm}
\end{figure}

\section{Realized moment variation and empirical study}\label{Sect:GMM}

\subsection{Definition and basic property}
Since the moment variations are not directly observable, we use the realized version of the variations to perform empirical study.
The realized third and fourth moments of $R_t$ are the realizations of the variations over a partition $0 = t_0< \cdots < t_N =t$ multiplied by $\frac{3}{2}$, i.e., the finite sum approximations represented by
\begin{align}
\frac{3}{2}\widehat {[R, R^2]}_t &= \frac{3}{2}\sum_{i=1}^{N} (R_{i} - R_{i-1})(R^2_{i} - R^2_{i-1}),\label{Eq:def_rm3}\\
\frac{3}{2}\widehat {[R^2]}_t &= \frac{3}{2}\sum_{i=1}^{N} (R^2_{i} - R^2_{i-1})^2,\label{Eq:def_rm4}
\end{align}
respectively. For simplicity, $R_i = R_{t_i}$ in the above equations.
By the semimartingale theory, we have
\begin{eqnarray*}
&\widehat {[R, R^2]}_t \rightarrow  [R,R^2]_t, \quad &\frac{3}{2}\E\left[ \widehat {[R, R^2]}_t \right] = \frac{3}{2}\E[[R,R^2]_t]= \E[R_t^3],\\
&\widehat {[R^2]}_t\rightarrow  [R^2]_t, \quad &\frac{3}{2}\E\left[\widehat {[R^2]}_t\right] = \frac{3}{2}\E[[R^2]_t] = \E[R_t^4]
\end{eqnarray*}
with limits in probability as the partition size goes to zero.
Similarly with the relationship between realized variance and integrated variance,
the realized moment variations are unbiased and consistent estimators of $1.5[R,R^2]_t$ and $1.5[R^2]_t$.
In addition, with the martingale condition for the return process, the realized third and fourth moment variations are unbiased estimators of the third and fourth moments of the return $R_t$, respectively.

\begin{figure}
\centering
\includegraphics[width=6cm]{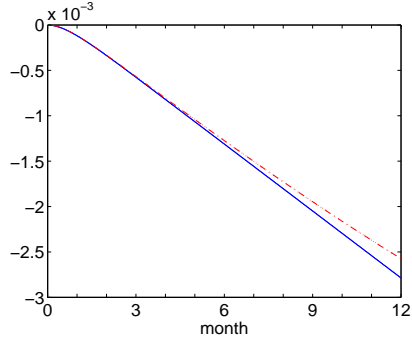}
\caption{Term structure of bias : expected third moment with zero drift (solid) and constant drift (dashed) under stochastic volatility model}
\label{Fig:bias}
\end{figure} 

Empirical study suggests that the realized moment variations are relatively efficient estimator compared with the conventional measures of the moments.
Using the S\&P 500 index series ranged from 2001 to 2007,
we plot the dynamics of the daily realized moments defined by Eqs.~\eqref{Eq:def_rm3}~and~\eqref{Eq:def_rm4} in Figures~\ref{Fig:rtm}~and~\ref{Fig:rfm}.
The finite sums are computed by five-minutes interval over one day horizon.
More precisely, the daily realized third moment variation is $1.5 \widehat {[R,R^2]}_{t_i,t_{i+1}}$ where the subscript implies that the quantity is observed from time $t_i$ to $t_{i+1}$ with each $t_{i+1} - t_i = 1$ day
and similarly for the daily realized fourth moment variation $1.5 \widehat {[R^2]}_{t_i,t_{i+1}}$.
In addition, the dynamics of realized $R^3_t$ and $R^4_t$ with $t=1$ day are plotted in Figures~\ref{Fig:r3}~and~\ref{Fig:r4}.

Let us compare Figures~\ref{Fig:rtm}~and~\ref{Fig:r3}.
The realized third moment variation varies from around $-6\times10^{-5}$ to $6\times10^{-5}$.
In contrast, the sample $R^3_t$ distributed from around $-1\times10^{-4}$ to $2\times10^{-5}$, implying much larger standard deviation in $R^3_t$ than the realized third moment variation.
Similarly, the sample $R^4_t$ in Figure~\ref{Fig:r4} has much larger standard deviation than the realized fourth moment variation in Figure~\ref{Fig:rfm}.
This observations imply that the realized moment variations $1.5 \widehat {[R,R^2]}$ and $1.5 \widehat {[R^2]}$ are relatively efficient estimators for the corresponding moments than the sample means of $R^3$ and $R^4$, respectively.

The summary statistics of $R^3_t$, $R^4_t$ and the realized moment variations are presented in Table~\ref{Table:statistics}.
The table shows that the sample standard deviations of the realized moment variations are about half of the standard deviations of $R^3_t$ and $R^4_t$.

\begin{figure}
\centering
  \includegraphics[width=10cm]{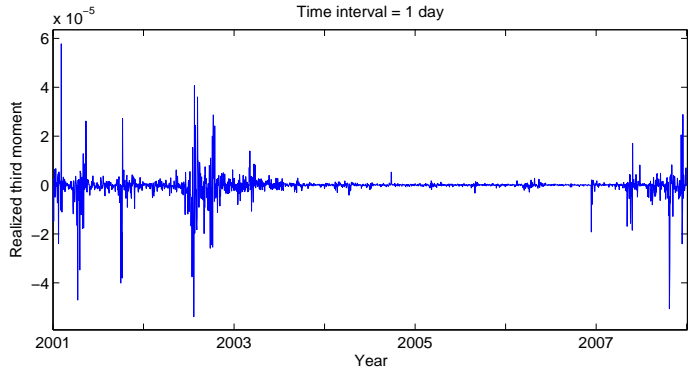}\\
  \caption{The dynamics of daily realized third moment variations of the S\&P 500 index from 2001 to 2007}\label{Fig:rtm}
\end{figure}

\begin{figure}
\centering
  \includegraphics[width=10cm]{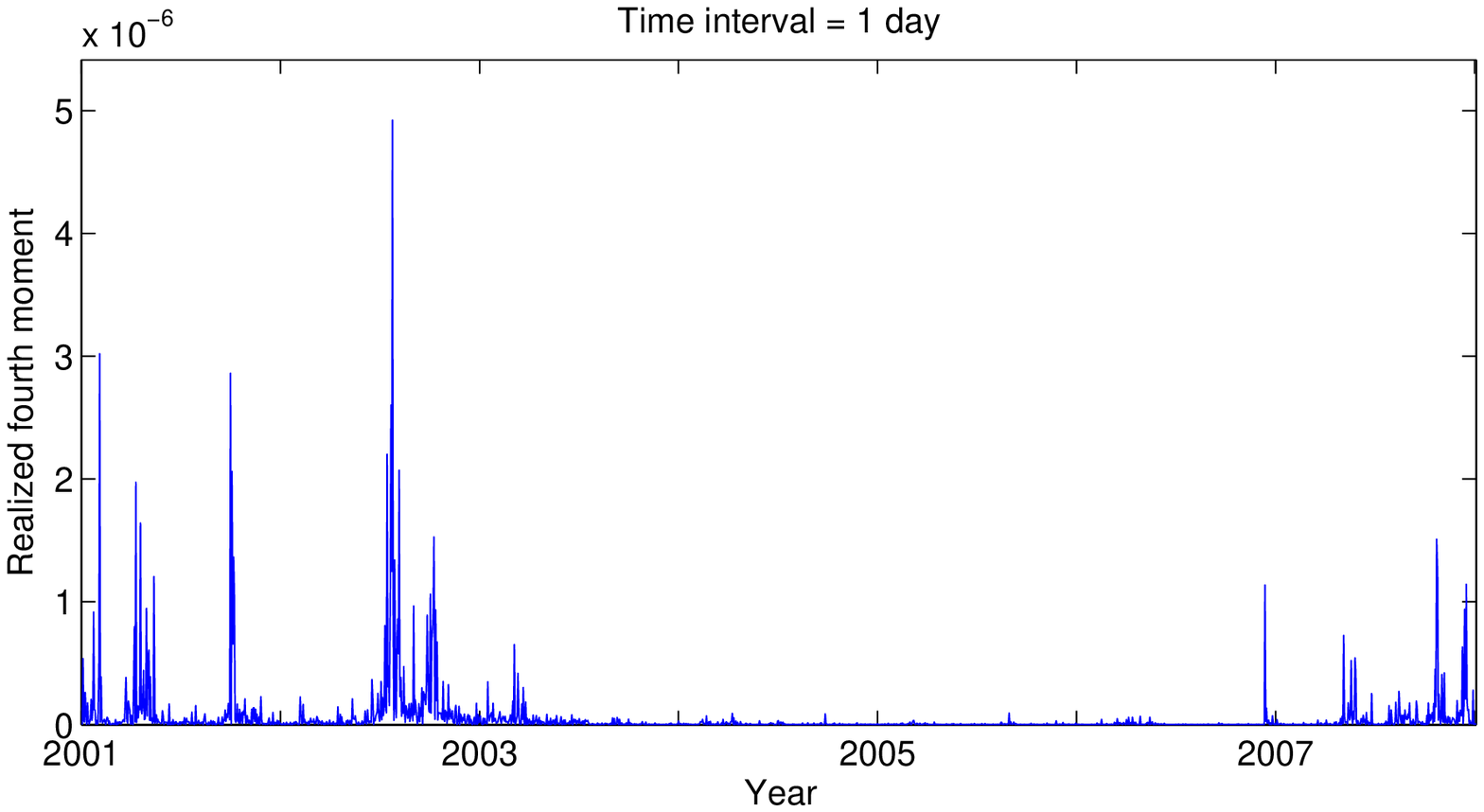}\\
  \caption{The dynamics of daily realized fourth moment variations of the S\&P 500 index from 2001 to 2007}\label{Fig:rfm}
\end{figure}

\begin{figure}
\centering
  \includegraphics[width=10cm]{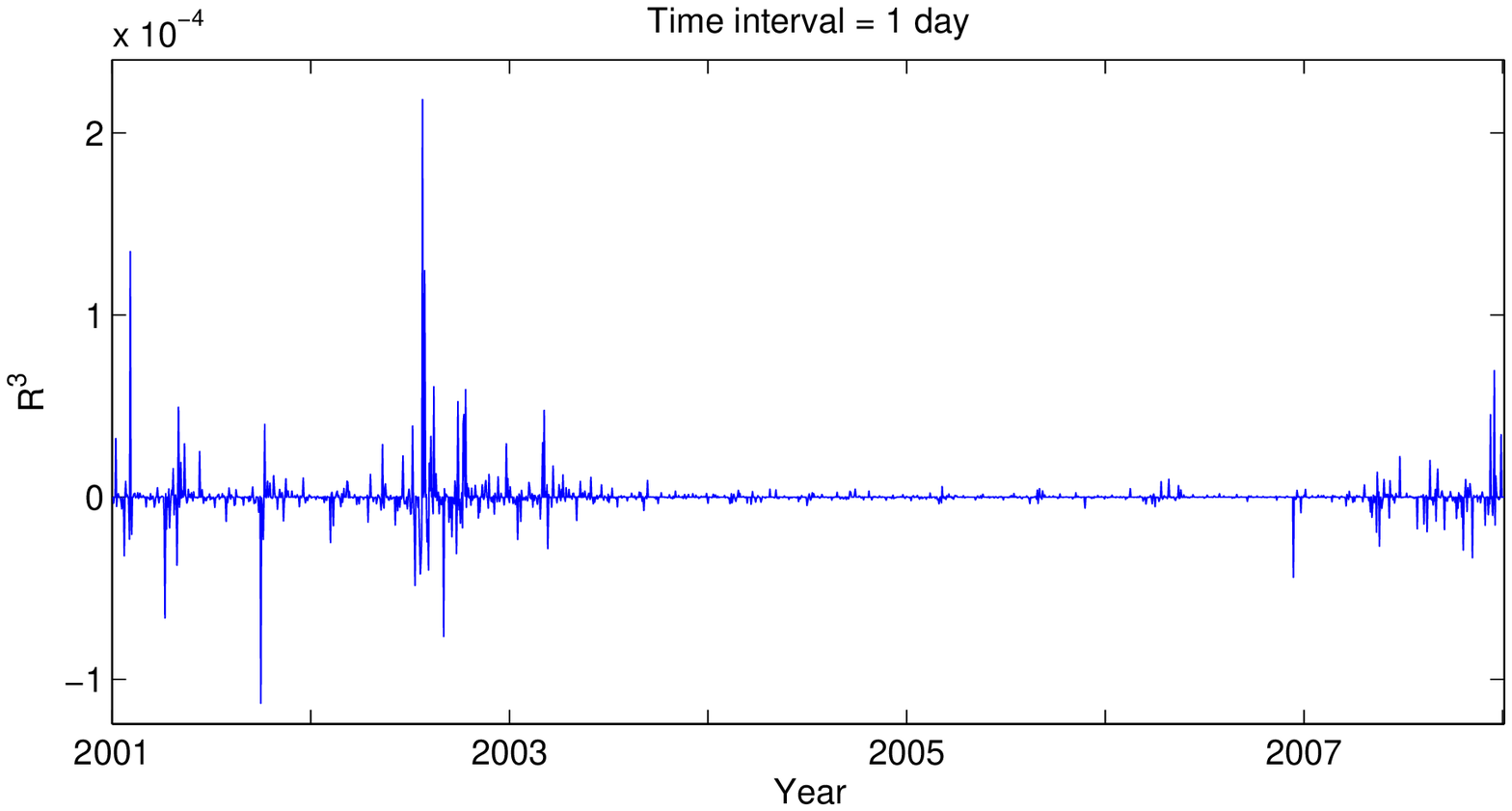}\\
  \caption{The dynamics of realized $R_t^3$ of the S\&P 500 index from 2001 to 2007}\label{Fig:r3}
\end{figure}

\begin{figure}
\centering
  \includegraphics[width=10cm]{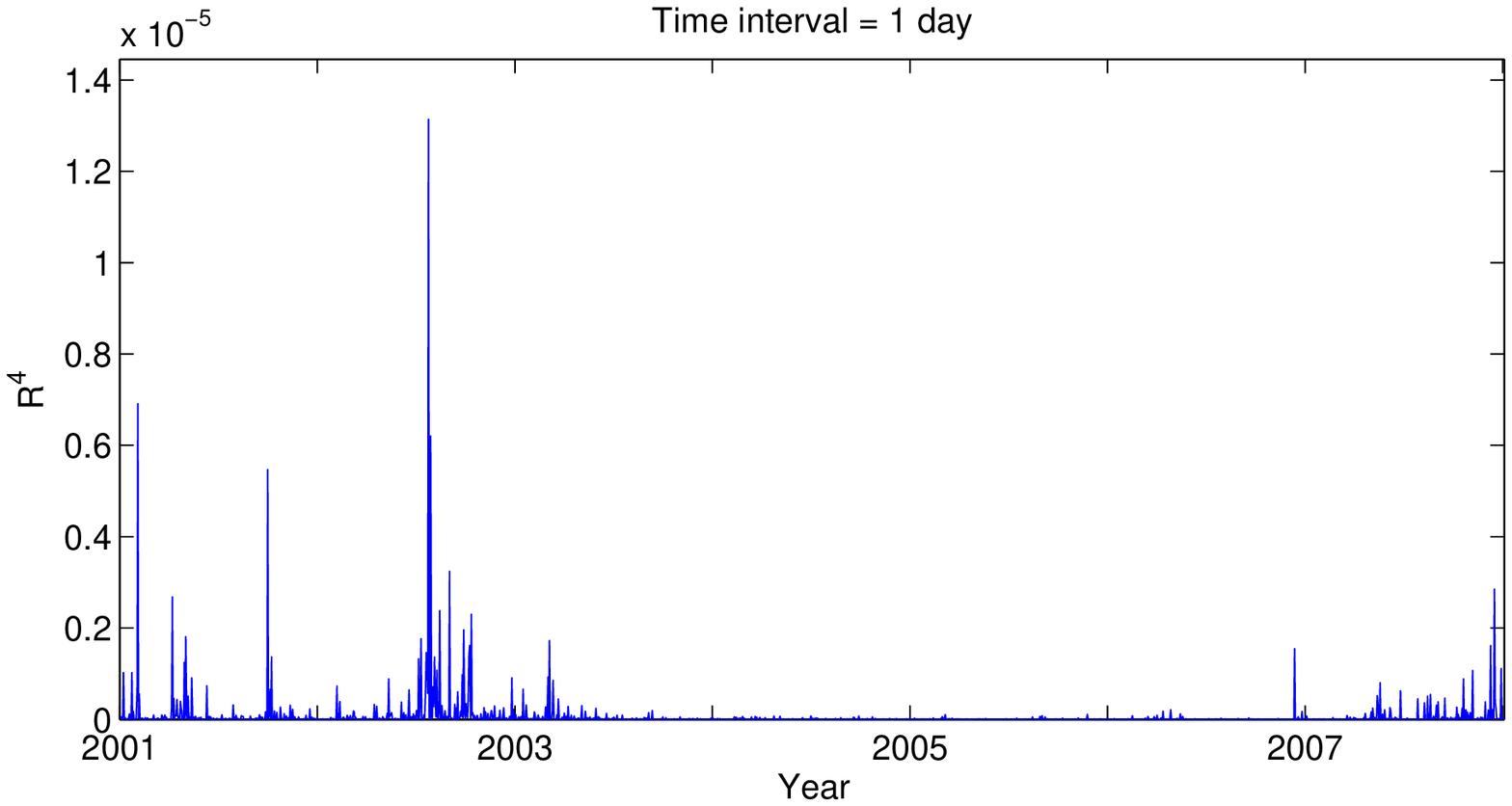}\\
  \caption{The dynamics of realized $R_t^4$ of the S\&P 500 index from 2001 to 2007}\label{Fig:r4}
\end{figure}

\begin{table}
\caption{The sample mean and standard deviations of realized moment variations, $R_t^3$ and $R_t^4$ based on the S\&P 500 index  from 2001 to 2007}\label{Table:statistics}
\centering
\begin{tabular}{ccccc}
  \hline
            & $[R^2,R]_t$ & $R_t^3$ & $[R^2]_t$ & $R_t^4$ \\
  \hline
  mean      & $-4.1626\times10^{-7}$ & $1.1697\times10^{-7}$ & $5.7238\times10^{-8}$ & $6.8150\times10^{-8}$ \\
  std. dev. & $ 4.9957\times10^{-6}$ & $9.7989\times10^{-6}$ & $2.3661\times10^{-7}$ & $4.5106\times10^{-7}$ \\
  \hline
\end{tabular}
\end{table}

The sample mean of $R^3_t$ is positive, meanwhile the sample mean of the realized third moment is negative.
The positive sample mean of $R^3_t$ contrasts with the fact that the financial asset return distribution is generally negatively skewed.
It is well known that the evidence for negative skewness of daily returns is rather weak compared with the returns over longer time horizons.
However, the sample mean of the realized third moment variation is still negative and this might support the negative skew of daily return.

We perform hypothesis tests to examine the negative skewness of the daily return distribution of the S\&P 500 index.
Obviously, the result of the $t$-test of the null hypothesis that $\E[R^3_t] =0$ against the alternative hypothesis that $\E[R^3_t] <0$ may not reject the null hypothesis with the exact level of significance $p\textrm{-value}= 0.6958$.
In contrast, the $p$-value of the $t$-test of the null hypothesis that $\E[\widehat{[R,R^2]}_t] =0$ against $\E[\widehat{[R,R^2]}_t] <0$ is $1.7823\times10^{-4}$ and we reject the null hypothesis at the level of significance 0.01 or even at the higher level.

Since the distribution of the realized third moments is not normal and there might be an error in the $t$-test.
We perform the \cite{Wilcoxon} signed rank test for the hypothesis that the realized third moment has zero median against the alternative that the distribution does not have zero median.
The null hypothesis is also rejected with the reported $p\textrm{-value}=0.0026$ and this is an evidence supporting negative skewness of the daily return.
This is an example that the realized third moment variation based inference have a better result than the case of using the cube of daily return.

\subsection{GMM estimation of SV model}\label{Sect:estimation}
In this section, we show estimation methods of a stochastic volatility model based on the realized second and third moment variations.
Our methodology is an extension of \cite{Bollerslev2002}, \cite{Bollerslev2011} and \cite{Garcia2011}
where the integrated variance based estimations of stochastic volatility models are employed.
These approaches are based on the fact that
although the instantaneous variance process $V$ is latent, the integrated variance is observable and hence we are able to estimate the parameters which belong to the variance process using the integrated variance.
In this paper, we have an additional observable high-frequency based quantity, the realized third moment variation.

Consider again the Heston-type stochastic volatility model in Eqs.~\eqref{Eq:Heston1}~and~\eqref{Eq:Heston2}.
Suppose that total $N$ numbers of the realized second and third moments are observed
and each finite sum of the moment is computed over $[t_i, t_{i+1}]$, $i=0, \ldots, M-1$.
Each $t_i$ is equally distributed with $\Delta = t_{i+1} - t_{i}$.
For the empirical study in the later, we set $\Delta = 1$ day = 1/252 year.

The following moment conditions are considered.
First, the expected value of the quadratic variation of return
\begin{equation}
E[ [R]_{t} ] = \E\left[ \int_0^t V_s \D s \right] = \frac{1-\e^{-\kappa t}}{\kappa} (V_0 -\theta) + \theta t. \label{Eq:mean_rsm}
\end{equation}
Second, the expected value of 1.5 times of the third moment variation
\begin{align}
\frac{3}{2}\E[[R^2,R]_t] = \frac{3\gamma \rho}{\kappa} \left[ V_0\left(\frac{1-\e^{-\kappa t}(\kappa t +1)}{\kappa}\right) + \theta\left(\frac{\e^{-\kappa t}(\kappa t +2)-2}{\kappa} + t \right) \right].\label{Eq:mean_rtm}
\end{align}
Third, the expected value of the square of the integrated variance in Proposition~\ref{Prop:variance}.

With derived moment conditions in Eqs.~\eqref{Eq:mean_rsm},~\eqref{Eq:mean_rtm}~and~\eqref{Eq:mean_sqauredrsm2}, we are able to perform a generalized method of moments (GMM) estimation.
However, before proceeding the GMM, we show a simple estimation method of the stochastic volatility model based on the realized second and third moments.
The method might be useful to find a starting point for the further GMM.
In addition, we will show that the simple estimator itself is good or even better than a complicated GMM.

For the estimation, it is worthwhile to understand the characteristic of each parameter in the stochastic volatility model.
First, the parameter $\theta$ is a long run average of, or unconditional variance since 
$\E [V_s] \rightarrow \theta$ as $s \rightarrow \infty$.
The parameter $\theta$ is estimated by the sample mean of normalized realized second moment
$$\hat \theta_0 = \frac{1}{M} \sum_{i=0}^{M-1} \frac{\widehat{[R]}_{t_i,t_{i+1}}}{\Delta}$$
where $\widehat{[R]}_{t_i,t_{i+1}}$ denotes the realized variance observed over $[t_i, t_{i+1}]$.

Second, the parameter $\kappa$ explains the relationship between the spot variance or integrated variance and its lagged value.
The estimation of $\kappa$ is more difficult than other parameters which can be estimated by the sample means of observed certain quantities such as the realized second or third moments.
The variance process $V$ follows a mean reverting process and its discretized version follows an autoregressive moving average (ARMA) scheme and so does the integrated variance.
Indeed, under the Heston model, by Eqs.~\eqref{Eq:mean_V}~and~\eqref{Eq:mean_rsm},
\begin{equation}
\E_{t_{i-1}} [[R]_{t_i,t_{i+1}}] = \e^{-\kappa \Delta} \E_{t_{i-1}}[[R]_{t_{i-1},t_{i}}] + (1-\e^{-\kappa \Delta})\theta \Delta.\label{Eq:relation_rsm}
\end{equation}

More precisely, note that
\begin{align*}
\D(\e^{\kappa u }V_u) &= \e^{\kappa u} \D V_u + \kappa \e^{\kappa u} V_u \D u \\
&= \kappa \theta \e^{\kappa u} \D u + \gamma \e^{\kappa u} \sqrt{V_u} \D W^v_u
\end{align*}
and by integrating both sides from $t_i$ to $s$,
\begin{align*}
V_{s} = \e^{-\kappa(s-t_i)} (V_{t_i} - \theta) + \theta + \int_{t_i}^{s} \gamma \e^{-\kappa (s -u)} \sqrt{V_u} \D W_u^v.
\end{align*}
If $s=t_{i+1}$, then the above equation is an AR representation for the discretized series of $V$ and we have
\begin{equation}
 V_{t_{i+1}} - \theta = (V_{t_{i}} - \theta)\e^{-\kappa \Delta} + \int_{t_{i}}^{t_{i+1}} \gamma \e^{-\kappa (t_{i+1} - u)}\sqrt{V_u} \D W^v_u.\label{Eq:V_ARMA}
\end{equation}
By integrating $V_s$ from $t_i$ to $t_{i+1}$,
\begin{align}
\int_{t_i}^{t_{i+1}} V_s \D s ={}& [R]_{t_i,t_{i+1}} = \frac{1-\e^{-\kappa \Delta}}{\kappa} (V_{t_i} - \theta) + \theta \Delta + \int_{t_i}^{t_{i+1}}\int_{t_i}^{s} \gamma \e^{-\kappa(s-u)}\sqrt{V_u} \D W_u^v \D s \label{Eq:IV_ARMA}\\
={}&\frac{1-\e^{-\kappa \Delta}}{\kappa} (V_{t_{i-1}} - \theta)\e^{-\kappa \Delta} + \theta \Delta \nonumber\\
&+ \frac{1-\e^{-\kappa \Delta}}{\kappa} \int_{t_{i-1}}^{t_i} \gamma \e^{-\kappa (t_i - u)}\sqrt{V_u} \D W^v_u +  \int_{t_i}^{t_{i+1}}\int_{t_i}^{s} \gamma \e^{-\kappa(s-u)}\sqrt{V_u} \D W_u^v \D s \label{Eq:IV_ARMA2}
\end{align}
where we use Eq.~\eqref{Eq:V_ARMA} for the second equality.
Using the lagged version of Eq.~\eqref{Eq:IV_ARMA}, we have
$$  \frac{1-\e^{-\kappa \Delta}}{\kappa} (V_{t_{i-1}} - \theta) = [R]_{t_{i-1},t_{i}} - \theta \Delta - \int_{t_{i-1}}^{t_{i}}\int_{t_{i-1}}^{s} \gamma \e^{-\kappa(s-u)}\sqrt{V_u} \D W_u^v \D s,$$
and substituting the above equation to Eq.~\eqref{Eq:IV_ARMA2},
\begin{align*}
[R]_{t_i,t_{i+1}} ={}& \e^{-\kappa \Delta} [R]_{t_{i-1},t_{i}} + (1-\e^{-\kappa \Delta})\theta \Delta  +  \frac{1-\e^{-\kappa \Delta}}{\kappa} \int_{t_{i-1}}^{t_i} \gamma \e^{-\kappa (t_i - u)}\sqrt{V_u} \D W^v_u \\ 
&+  \int_{t_i}^{t_{i+1}}\int_{t_i}^{s} \gamma \e^{-\kappa(s-u)}\sqrt{V_u} \D W_u^v \D s - \e^{-\kappa \Delta} \int_{t_{i-1}}^{t_{i}}\int_{t_{i-1}}^{s} \gamma \e^{-\kappa(s-u)}\sqrt{V_u} \D W_u^v \D s\\
={}& \e^{-\kappa \Delta} [R]_{t_{i-1},t_{i}} + (1-\e^{-\kappa \Delta})\theta \Delta \\
&+  \int_{t_i}^{t_{i+1}}\int_{t_i}^{s} \gamma \e^{-\kappa(s-u)}\sqrt{V_u} \D W_u^v \D s 
+ \e^{-\kappa \Delta} \int_{t_{i-1}}^{t_{i}}\int_{s}^{t_i} \gamma \e^{-\kappa(s-u)}\sqrt{V_u} \D W_u^v \D s
\end{align*}
The error term represented by the last two integrations of the above formula is approximately an MA(1) process.
For the detailed information about the ARMA representation of the integrated and realized variance, consult \cite{Meddahi}.

We use this fact to estimate the slope parameter $\e^{-\kappa \Delta}$ of
$$ [R]_{t_i,t_{i+1}} - \overline{[R]} = \e^{-\kappa \Delta} ([R]_{t_{i-1}, t_{i}} - \overline{[R]}) +  n_{i} + b_1n_{i-1}$$
where $\overline{[R]}$ denotes the sample average of the integrated variance and we use $n$ to represent MA(1) error term.
In these days, the estimation procedure of ARMA models is provided by several statistical packages.
Let $\hat \kappa_0$ denote the estimator for $\kappa$ obtained by this method.

Third, the parameter $\gamma$ is for the volatility of the variance.
As $t \rightarrow \infty$, $ \E [V_t^2] -\E [V_t]^2 \rightarrow \frac{\gamma^2 \theta}{2 \kappa}$ by Eq.~\eqref{Eq:mean_squaredV}.
Using this fact and Eq.~\eqref{Eq:mean_sqauredrsm}, we have
$$\E \left[ [R]_t^2 \right] \rightarrow \theta^2 t^2 + \frac{\gamma^2\theta}{\kappa^2}\left(\frac{\e^{-\kappa t}}{\kappa}+t-\frac{1}{\kappa} \right).$$
In other words, this is the expectation of the squared second moment by assuming $V$ starts in negative infinity (or at least quite long time ago) and is at a stationary state at $t$.
Thus, the estimator of $\gamma$ is
$$
\hat\gamma_0 = \sqrt{ \frac{ \frac{1}{M}\sum_{i=0}^{M-1} \widehat{[R]^2}_{t_i,t_{i+1}} - \hat\theta_0^2\Delta^2}{\frac{\hat\theta_0}{\hat\kappa_0^2}\left(\frac{\e^{-\hat\kappa_0 \Delta}}{\hat\kappa_0}+\Delta-\frac{1}{\hat\kappa_0} \right)} }.
$$

Fourth, to estimate the leverage parameter $\rho$, which takes into account the skewness of return, we use the formula of the expectation of the third moment variation derived in the previous section.
Since the long run expectation of the third moment variation is represented by
\begin{equation}
\E\left[[R,R^2]_t\right] = 2\int_0^t \E[R_u V_u] \D u \rightarrow  \frac{2\gamma\rho\theta}{\kappa} \left( \frac{\e^{-\kappa t}}{\kappa} + t - \frac{1}{\kappa} \right)\label{Eq:simple_rtm}
\end{equation}
and hence the leverage parameter $\rho$ is estimated by
$$ \hat \rho_0 = \frac{\frac{1}{M}\sum_{i=0}^{M-1} \widehat{[R,R^2]}_{t_i,t_{i+1}}}{\frac{2\hat\gamma_0\hat\theta_0}{\hat\kappa_0} \left( \frac{\e^{-\hat\kappa_0 \Delta}-1}{\hat\kappa_0} + \Delta \right) }.$$

Now we start a GMM procedure, known to be a consistent and asymptotically normal estimator, to estimate the Heston model based on the exact relationship between the moments and their lagged values.
This approach is little bit different from the previous simple estimation based on long run approximation.
For simplicity, we rewrite
\begin{align}
\E[V_{t_{i+1}} | \F_{t_i}] &= \e^{-\kappa \Delta} V_{t_i}  + \theta (1-\e^{-\kappa \Delta}) =: a V_{t_{i}} + b.\label{Eq:relation_V}
\end{align}
Considering Eq.~\eqref{Eq:mean_rtm}, we put
\begin{align*}
\alpha_3 &= \frac{2\gamma\rho}{\kappa^2}\left( 1- \e^{-\kappa \Delta}(\kappa \Delta + 1) \right), \\
\beta_3 &= \frac{2\gamma\rho}{\kappa} \left( \frac{ \e^{-\kappa \Delta}(\kappa \Delta +2)-2}{\kappa} +\Delta\right) \theta,
\end{align*}
then, by combining the results in Eqs.~\eqref{Eq:mean_rtm}~and~\eqref{Eq:relation_V},
\begin{align}
\E[[R,R^2]_{t_i, t_{i+1}} | \F_{t_i}] = a[R,R^2]_{t_{i-1}, t_{i}} - a \beta_3 + \alpha_3 b + \beta_3\label{Eq:relation_rtm}
\end{align}
Similarly, by considering Eq.~\eqref{Eq:mean_sqauredrsm2}, put
\begin{align*}
C ={}& \frac{1}{\kappa^2}(\e^{-\kappa \Delta} - 1)^2 \\
D ={}& \left\{ -\frac{2}{\kappa}\left(\theta + \frac{\gamma^2}{\kappa} \right)\Delta\e^{-\kappa \Delta} + \frac{4\theta}{\kappa^2}\e^{-\kappa \Delta} - \left(\frac{2\theta}{\kappa^2} + \frac{\gamma^2}{\kappa^3} \right)\e^{-2\kappa \Delta}+ \frac{2\theta}{\kappa}\Delta - \frac{2\theta}{\kappa^2} + \frac{\gamma^2}{\kappa^3}\right\} \\
E ={}& \frac{2\theta}{\kappa}\left(\theta + \frac{\gamma^2}{\kappa} \right)\Delta\e^{-\kappa \Delta} + 2\left(\frac{\gamma^2\theta}{\kappa^3} - \frac{\theta^2}{\kappa^2} \right)\e^{-\kappa \Delta} + \left(\frac{\theta^2}{\kappa^2} + \frac{\gamma^2\theta}{2\kappa^3} \right)\e^{-2\kappa \Delta}\\
&+ \theta^2 \Delta^2 + \left(\frac{\gamma^2 \theta}{\kappa^2} - \frac{2\theta^2}{\kappa} \right)\Delta + \frac{\theta^2}{\kappa^2} - \frac{5\gamma^2\theta}{2\kappa^3}
\end{align*}
and with Eq.~\eqref{Eq:mean_squaredV}, put
\begin{align*}
c &= \e^{-2\kappa \Delta}\\
d &= \frac{2\kappa\theta + \gamma^2}{\kappa}(\e^{-\kappa \Delta} - \e^{-2\kappa \Delta}) \\
f &= \frac{\theta(2\kappa\theta + \gamma^2)}{\kappa}\left( \frac{1}{2} - \e^{-\kappa \Delta} + \frac{1}{2}\e^{-2\kappa \Delta}\right).
\end{align*}
Then we have
\begin{align}
\E\left[ [R]^2_{t_{i+1},t_{i+2}} | \F_{t_{i+1}} \right] ={}& c [R]^2_{t_i,t_{i+1}} + \frac{Cd + (a-c)D}{\alpha} [R]_{t_i,t_{i+1}}\nonumber\\
&-\frac{\beta}{\alpha} (Cd + (a-c)D) + Cf + bD + (1-c)E \nonumber\\
=:{}& c [R]^2_{t_i,t_{i+1}} + F [R]_{t_i,t_{i+1}} + G \label{Eq:relation_sqrsm}
\end{align}
where $\alpha = (1-\e^{-\kappa \Delta})/\kappa$ and $\beta=(1-\alpha)\theta\Delta$.

Now we have three exact relationships in Eqs.~\eqref{Eq:relation_rsm},~\eqref{Eq:relation_rtm}~and~\eqref{Eq:relation_sqrsm} between the variations and their lagged values.
Note that these depends on the successive relationships of the variations contrast with the simple method where the method largely depends on the sample averages of the variations.
With these relationships, we consider an objective function of $\eta = \{\kappa, \theta, \gamma, \rho\}$ to perform a generalized method of moments estimation:
\[
g(\eta) = \left[
  \begin{array}{c}
    [R]_{t_{i+1},t_{i+2}} - \e^{-\kappa \Delta} [R]_{t_{i},t_{i+1}} - (1-\e^{-\kappa \Delta})\theta\Delta\\
    \{ [R]_{t_{i+1},t_{i+2}} - \e^{-\kappa \Delta} [R]_{t_{i},t_{i+1}} - (1-\e^{-\kappa \Delta})\theta\Delta\} [R]_{t_{i-1},t_{i}}\\
    \{ [R]_{t_{i+1},t_{i+2}} - \e^{-\kappa \Delta} [R]_{t_{i},t_{i+1}} - (1-\e^{-\kappa \Delta})\theta\Delta\} [R]_{t_{i-2},t_{i-1}}\\
    \left. [R,R^2]_{t_{i+1}, t_{i+2}} - a[R,R^2]_{t_{i}, t_{i+1}} + a \beta_3 - \alpha_3 b - \beta_3 \right. \\
    ([R,R^2]_{t_{i+1}, t_{i+2}} - a[R,R^2]_{t_{i}, t_{i+1}} + a \beta_3 - \alpha_3 b - \beta_3)[R,R^2]_{t_{i-1}, t_{i}} \\
    \left. [R]^2_{t_{i+1},t_{i+2}} - c [R]^2_{t_i,t_{i+1}} - F [R]_{t_i,t_{i+1}} - G \right.
  \end{array}
\right]
\]
The lagged values of $[R]$ and $[R,R^2]$ are used as instrumental variables as in the GMM estimation of ARMA(1,1) models.

Table~\ref{Table:Estimation} presents the estimation results of Heston's stochastic volatility model and the S\&P 500 return series from 2003 to 2007.
In the second and fifth columns of the table, we have presumed parameter values of the models, and the third and sixth columns report the corresponding estimates by the simple method,
and the fourth and seventh columns report the estimates by the GMM.
The last two columns present the estimates of parameters by the simple and GMM methods under the assumption that the S\&P 500 return series follows the stochastic volatility model.
The parenthesis are for standard errors except for $\gamma$ where the the standard errors for $\gamma^2$ are reported.
For the GMM with the S\&P 500 data, we use the results of the simple estimation as a starting point.

\begin{table}
\centering
\caption{Estimation results of Heston's model and the S\&P 500 return}\label{Table:Estimation}
\begin{tabular}{ccccccc|cc}
  \hline
  & \multicolumn{5}{c}{Heston's model} & &\multicolumn{2}{c}{S\&P 500}\\
  & Model I & Simple & GMM & Model II & Simple & GMM & Simple & GMM\\
  \hline
  $\theta$ & 0.05 & 0.0493 & 0.0475 & 0.02 & 0.0204 & 0.0202 & 0.0196 & 0.0194 \\
  & & & (0.0076) & & & (0.0022) & & (0.0056)\\
  $\kappa$ & 5 & 6.1466 & 7.2500 & 15 & 14.320 &  13.404 & 10.047 & 9.28\\
  & & & (1.7295) & & & (2.6344) & & (13.909)\\
  $\gamma$ & 0.8 & 0.7708 & 0.5792 & 0.7 & 0.6571 & 0.5075 & 0.8483 & 0.4168\\
  & & & (0.0065) & & & (0.0422) & & (0.2216)\\
  $\rho$ & -0.5 & -0.6719 & -0.7937 & 0.3 & 0.3957 &  0.5551 & -0.6189 & -0.9891\\
  & & & (4.9118) & & & 1.5832) & & (5.6924)\\
  \hline
\end{tabular}
\end{table} 

The results of the simple methods are quite close to the original parameter settings in the simulation study.
The estimation results of the GMM are not much improved compared with the result of the simple method
and the GMM estimates are even worse for the parameters $\gamma$ and $\rho$.
This is because the parameters $\theta$, $\gamma$ and $\rho$ take into account the long run property the processes.
Taking sample average in the simple method have a better performance compared with taking successive relation in GMM method,
since the successive relations are not informative enough to explain the long run expectations.
Thus, it is hard to improve the estimation performance when the estimation is based on successive relationships such as Eqs.~\eqref{Eq:relation_rtm} and \eqref{Eq:relation_sqrsm}.
Furthermore, the estimation results of the GMM is sensitive to the choice of the instrumental variables.
We conclude that the simple estimation method based on long run expectations of the variations are good enough for the estimation of parameters $\theta$, $\gamma$ and $\rho$.
For the estimation of $\kappa$, we need a successive relation of the second moment and the conventional ARMA estimation is enough for the estimation.

The estimate of $\rho$ of the S\&P 500 by our simple method is larger in magnitude than the estimate of $\rho$ in \cite{Bollerslev2002}, which is $-0.0243$.
However, the data in the paper is the Deutsch Mark U.S. Dollar spot exchange rate from 1986 to 1996 which is different from ours.
Our result is rather similar to the result of \cite{Ait2013} where they used the S\&P 500 data ranged from 2004 to 2007 and the estimate of $\rho$ is $-0.77$.

In \cite{Bollerslev2002} where the estimate of $\kappa$ is around 0.14 but the unit of the model in the paper is on a daily basis but our model is on a yearly basis.
It is useful to compare $\exp(-\kappa \Delta)$ which explains the relation between 
$[R]_{t_i,t_{i+1}}$ and $[R]_{t_{i-1},t_{i}}$ as in Eq.~\eqref{Eq:relation_rsm}.
By matching the scale, in our simple method, $-\kappa\Delta =0.0399$ and $\exp(-\kappa \Delta) = 0.9609$ and in \cite{Bollerslev2002}, $\exp(-\kappa \Delta) = 0.8637$.
This implies that there is a weaker mean reverting property (i.e., stronger persistence) in the variance process in our empirical analysis.

\section{Conclusion and future work}\label{Sect:conclusion}
The basic properties of the third and fourth moment variations are derived.
We proposed two methods of estimation for a stochastic volatility model based on the realized second and third moments.
One is a simple method based on sample average of long run moments and ARMA type estimation.
The other is based on the GMM and exact relationship among second and third moments.
The simple method of estimation shows good performance compared with the complicated GMM.
Our strategy can be applied to other affine models and an interesting extension would be a multidimensional affine model in finance and economics with multiasset cases.

\bibliography{realized_moments}
\bibliographystyle{apalike}

\appendix
\section{Moment condition for the fourth moment variation}\label{Sect:fmv}

We consider the moment condition of $\E[[R^2]_t]$.
Note that
\begin{align*}
\E[R_u^2 V_u] &= \E \left[ V_u \left(2\int_0^u R_s\D R_s + \int_0^u \D [R]_s\right) \right] \\
&= \E \left[ \left( V_0 + \int_0^u \kappa (\theta -V_s) \D s + \int_0^u \sqrt{V_s} \D W^v_s \right ) \int_0^u R_s\sqrt{V_s}\D W^s_s \right] + \E\left[V_u \int_0^u V_s \D s \right]\\
&= -\kappa \E\left[\int_0^u V_s \D s \int_0^u R_s\sqrt{V_s}\D W^s_s \right] + \E\left[\int_0^u R_s V_s \D s\right] + \E\left[V_u \int_0^u V_s \D s \right]. 
\end{align*}
For the second and third terms, we already derive in Propositions~\ref{Prop:variance}~and~\ref{Prop:tmv}.
For the first term of the r.h.s., we have
\begin{align*}
m(u) := \E\left[\int_0^u V_s \D s \int_0^u R_s\sqrt{V_s}\D W^s_s \right] = \E\left[ \int_0^u V_s \left( \int_0^s R_k \sqrt{V_k} \D W^s_k\right) \D s \right]
\end{align*}
and
\begin{align*}
&\E \left[ V_s \int_0^s  R_k \sqrt{V_k} \D W^s_k \right]\\
&= \E \left[ \left(V_0 + \int_0^s \D V_k \right) \int_0^s  R_k \sqrt{V_k} \D W^s_k \right]\\
&= \E\left[\int_0^s \kappa (\theta - V_k) \D k \int_0^s R_k \sqrt{V_k} \D W^s_k\right] + \E
\left[\int_0^s \gamma \sqrt{V_k} \D W^v_k \int_0^s R_k \sqrt{V_k} \D W^s_k \right]\\
&= -\kappa \E\left[\int_0^s V_k \D k \int_0^s R_k \sqrt{V_k}\D W^s_k \right] + \E
\left[\gamma\rho \int_0^s R_k V_k\D k \right]\\
&= -\kappa m(s) + \frac{\gamma^2 \rho^2}{\kappa} \left\{(V_0-\theta)\left( \frac{1-\e^{-\kappa s}(\kappa s +1)}{\kappa} \right) + \theta \left(\frac{\e^{-\kappa s}-1}{\kappa} + s \right) \right\}
\end{align*}
and hence
$$ m(u) = \int_0^u \left[ -\kappa m(s) + \frac{\gamma^2 \rho^2}{\kappa} \left\{(V_0-\theta)\left( \frac{1-\e^{-\kappa s}(\kappa s +1)}{\kappa} \right) + \theta \left(\frac{\e^{-\kappa s}-1}{\kappa} + s \right) \right\} \right] \D s$$
and solving the integration equation, we have
\begin{align*}
m(u) = \frac{\gamma^2 \rho^2}{\kappa^3}\left[ (V_0-\theta)\left\{ 1-\e^{-\kappa u} - \left(\kappa u + \frac{\kappa^2}{2}u^2 \right)\e^{-\kappa u}\right\} + \theta\left\{ 2(\e^{-\kappa u} - 1) +  \kappa u (1 + \e^{-\kappa u})\right\}\right].
\end{align*}
Thus, by using Propositions~\ref{Prop:variance}~and~\ref{Prop:tmv}, we derive 
\begin{align*}
\E[R_u^2 V_u]={}& \frac{\gamma^2 \rho^2}{\kappa^3}\left[ (V_0-\theta)\left\{ 1-\e^{-\kappa u} - \left(\kappa u + \frac{\kappa^2}{2}u^2 \right)\e^{-\kappa u}\right\} + \theta\left\{ 2(\e^{-\kappa u} - 1) +  \kappa u (1 + \e^{-\kappa u})\right\}\right]\\
&+ (V_0-\theta)\left(\theta + \frac{\gamma^2}{\kappa}\right)u\e^{-\kappa u} + \left(\frac{(V_0-\theta)(V_0-2\theta)}{\kappa} - \frac{\gamma^2 V_0}{\kappa^2}\right)\e^{-\kappa u} \\
&+ \left\{ -\frac{(V_0-\theta)^2}{\kappa} + \frac{\gamma^2}{\kappa^2}\left(V_0 -\frac{1}{2}\theta \right) \right\}\e^{-2\kappa u} + \theta^2 u + \frac{(V_0-\theta)\theta}{\kappa} + \frac{\gamma^2\theta}{2\kappa^2}\\
&+\frac{\gamma \rho}{\kappa} \left[(V_0-\theta)\left( \frac{1-\e^{-\kappa u}(\kappa u +1)}{\kappa} \right) + \theta \left(\frac{\e^{-\kappa t}-1}{\kappa} + u \right) \right].
\end{align*}
In addition, by integrating $\E[R_u^2 V_u]$ with respect to $u$, the moment condition for the fourth moment variation is derived.

\section{The variance of the third moment variation}\label{Sect:vtmv}
Assume that the stochastic volatility model of Eqs.~\eqref{Eq:Heston1}~and~\eqref{Eq:Heston2}.
The drift term in the volatility usually leads to very complicated computation to derive the variance of the third moment.
The third moment variation over $[0,t]$ is
$$ [R^2, R]_t = 2\int_0^t R_u V_u \D u$$
and the variance of the third moment is
\begin{align*}
\mathrm{Var}([R^2, R]_t) = 4\E\left[\left(\int_0^t R_u V_u \D u \right)^2 \right] - 4\left(\E\left[\int_0^t R_u V_u \D u  \right]\right)^2.
\end{align*}
The second term in the r.h.s. is derived in Proposition~\ref{Prop:tmv}.
Let $s\vee u$ denote the minimum of $s$ and $u$.
To compute the first term of the r.h.s., we have
\begin{align*}
&\E\left[\left(\int_0^t R_u V_u \D u \right)^2 \right] = \E\left[\int_0^t \int_0^t R_u V_u R_s V_s \D u \D s \right] =\int_0^t \int_0^t \E [R_u V_u R_s V_s] \D u \D s \\
&=\int_0^t \int_0^t \E [R_{u\vee s} V_{u\vee s}(R_{u\vee s}+\Delta R)( V_{u\vee s} + \Delta V)]\D u \D s\\
&=\int_0^t \int_0^t \E [R_{u\vee s}^2 V_{u\vee s}^2 + R_{u\vee s}^2 V_{u\vee s}\Delta V + R_{u\vee s}V_{u\vee s}^2 \Delta R + R_{u\vee s} V_{u\vee s} \Delta R  \Delta V ]\D u \D s.
\end{align*}
For the expectation of each term in the integrand, we have
\begin{align*}
\E [R_{u\vee s}^2 V_{u\vee s}\Delta V] &= \E [R_{u\vee s}^2 V_{u\vee s} \E[ \Delta V | \F_{s\vee u}]] = \E[R_{u\vee s}^2 V_{u\vee s}(V_{u\vee s}-\theta)(\e^{-\kappa|s-u|}-1)] \\
&= \E[[R_{u\vee s}^2 V_{u\vee s}^2](\e^{-\kappa|s-u|}-1) - \E[[R_{u\vee s}^2 V_{u\vee s}]\theta (\e^{-\kappa|s-u|}-1)
\end{align*}
where we use Eq.~\eqref{Eq:relation_V}, and
$$ \E [R_{u\vee s}V_{u\vee s}^2 \Delta R] = 0$$
and
\begin{align*}
\E[R_{u\vee s} V_{u\vee s} \Delta R  \Delta V] ={}& \E[\E[R_{u\vee s} V_{u\vee s} \Delta R  \Delta V | \F_{s\vee u}] ]\\
={}& \rho \gamma  \E[ R_{u\vee s} V_{u\vee s} \{ (V_{u\vee s} - \theta)|u-s|\e^{-\kappa|u-s|} + \theta(V_{u\vee s} - \theta + 1)|u-s| \} ]\\
={}& \E[ R_{u\vee s} V_{u\vee s}^2] \rho \gamma |u-s|(\theta+\e^{-\kappa|u-s|}) \\
&+ \E[ R_{u\vee s} V_{u\vee s}]\rho \gamma \theta |u-s|(1-\theta-\e^{-\kappa|u-s|})
\end{align*}
where we use Eq.~\eqref{Eq:RV}.
Therefore,
\begin{align*}
&\E\left[\left(\int_0^t R_u V_u \D u \right)^2 \right] \\
&=\int_0^t \int_0^t \left\{ \E[[R_{u\vee s}^2 V_{u\vee s}^2]\e^{-\kappa|s-u|} - \E[[R_{u\vee s}^2 V_{u\vee s}]\theta (\e^{-\kappa|s-u|}-1) \right.\\
&\quad \left. + \E[ R_{u\vee s} V_{u\vee s}^2] \rho \gamma |u-s|(\theta+\e^{-\kappa|u-s|}) 
+ \E[ R_{u\vee s} V_{u\vee s}]\rho \gamma \theta |u-s|(1-\theta-\e^{-\kappa|u-s|}) \right\} \D u \D s.
\end{align*}
The above representation shows that the variance of the third moment variation depends on the expected values of $E[[R_{u\vee s}^2 V_{u\vee s}^2]]$, $E[[R_{u\vee s}^2 V_{u\vee s}]]$, $E[[R_{u\vee s} V_{u\vee s}^2]]$ and $E[[R_{u\vee s} V_{u\vee s}]]$.

\end{document}